\documentclass[a4paper,12pt]{article}
\usepackage[russian, english]{babel}
\usepackage{amsmath}
\usepackage{ulem}
\usepackage{calc}
\usepackage{vmargin}
\usepackage{amstext}
\usepackage{graphicx}
\usepackage{amsfonts}
\usepackage{amssymb}
\usepackage{verbatim}

\usepackage{amsthm}
\newtheorem{theorem}{Theorem}[section]
\newtheorem{proposition}[theorem]{Proposition}

\theoremstyle{definition}

\begin{document}
		
		\title{\bf Exact Harmonic Dimensional Reduction and Conformal Lifting for Multicomponent $(3+1)$ Nonlinear Schrödinger Systems}

	%		\bf Exact Harmonic Dimensional Reduction of Multicomponent $(3+1)\mathrm{D}$ Nonlinear Schrödinger Systems and Gross--Pitaevskii Vortex Filaments}
		
		\author{O.~V. Kaptsov
			 \\ Federal Research Center for Information and Computational Technologies,
			 \\ Novosibirsk, Russia
			\\ Email: kaptsov@ict.nsc.ru}
		
		\date{}
		\maketitle
		
		\numberwithin{equation}{section}
		
		\begin{abstract}
			A harmonic   dimensional reduction framework is developed for
			$(3+1)\mathrm{D}$ systems of coupled nonlinear Schr\"odinger-type
			equations with stationary transverse trapping potentials.
			The central result is a lifting lemma: if the transverse phase
			functions are harmonic and the trapping potential exactly cancels
			the squared phase gradient, the full $(3+1)\mathrm{D}$ system
			reduces identically to a closed $(1+1)\mathrm{D}$ integrable
			hierarchy, and every solution of the reduced system lifts to an
			exact solution of the original multidimensional model.
		The framework is applied to four systems.
		For the scalar Gross--Pitaevskii equation, Kuznetsov--Ma breathers
		are embedded in $(3+1)\mathrm{D}$ geometries carrying vortex
		lattices with finite, non-singular density at the cores.
		For the two-component Manakov system, the phase-inversion ansatz
		yields exact vector solutions with vanishing mass current and
		non-trivial transverse spin current modulated by the longitudinal
		breather.
		For the three-component spinor $F=1$ Bose--Einstein condensate,
		a symmetric Kuznetsov--Ma breather and a spin-exchange rogue wave
		are constructed, the latter exhibiting transient density
		amplification by a factor of nine in the $m_F=0$ channel.
		For the Maxwell--Bloch system, self-induced transparency solitons,
		two-soliton elastic collisions, and Kuznetsov--Ma breathers are
		lifted to full $(3+1)\mathrm{D}$ geometry, with population
		inversion remaining transversely uniform despite arbitrary phase
		winding in the cross-section. 
		\end{abstract}
			
	 {\bf Keywords:} Gross--Pitaevskii equation,
	Nonlinear Schr\"odinger equation, Exact analytical solutions,
	Spinor Bose--Einstein condensate, Maxwell--Bloch system,
	Kuznetsov--Ma breather, Vortex lattice.

	\section{Introduction}
	\label{sec:introduction}
	
	The analytical description of localized wave structures in
	multi-dimensional configurations remains one of the most prominent
	challenges in mathematical physics. While one-dimensional systems
	governed by the canonical nonlinear Schr\"odinger (NLS) equation
	are celebrated for their strict integrability via the Inverse
	Scattering Transform~\cite{Zakharov1972, Ablowitz1981, Zakharov2009},
	their higher-dimensional counterparts exhibit a fundamentally
	different physical behavior. In $(2+1)\mathrm{D}$ and
	$(3+1)\mathrm{D}$ spaces, attractive cubic nonlinearities
	inevitably trigger critical wave collapse and self-focusing
	instabilities~\cite{Faddeev1987, Kivshar2003}. This phenomenon
	leads to finite-time singularities, which preclude the structural
	stability of multidimensional solitary waves and seriously
	complicate the accurate analytical mapping of quantum states.
	
	To stabilize localized structures against collapse, external
	potentials are routinely introduced. In the context of ultracold
	atomic physics, magneto-optical trapping fields and time-averaged
	orbiting potentials are widely utilized to confine quantum liquids
	governed by the Gross--Pitaevskii (GP)
	equation~\cite{Pitaevskii2003, Dalfovo1999}. Similarly, periodic
	photonic crystals and graded-index waveguides are actively employed
	in nonlinear optics to counteract transverse
	diffraction~\cite{Kartashov2011, Malomed2005}. In multi-component or spinor
	quantum gases, the field dynamics are further complicated by
	cross-phase modulation channels and coherent spin-exchange
	mechanisms between internal atomic
	states~\cite{Manakov1974, Tsuchida1998, Ho1998, OhmiMachida1998, Kawaguchi2012}.
	A similarly demanding analytical problem arises when nonlinear
	wave envelopes couple with resonant two-level media possessing
	coherent phase memory, described via Maxwell--Bloch
	equations~\cite{McCall1967, Ablowitz1981}.
	
	The vast majority of existing theoretical frameworks rely on
	variational approximations, numerical simulations, or perturbative
	tools. Although these methods provide valuable qualitative
	insights, they frequently obscure the exact geometric and algebraic
	symmetries hidden within the governing partial differential
	equations. Developing rigorous approaches capable of yielding
	exact analytical solutions for multi-dimensional nonlinear models
	therefore remains a critical objective. In this regard, Lie group
	analysis, functional separation of variables, and systematic
	algebraic reduction methods occupy a distinct position, as they
	allow for a structured exact reduction of spatial dimensionality.
	
	In this work, we present a generalized exact harmonic (conformal) dimensional
	reduction method applicable to a broad class of multi-component
	$(3+1)\mathrm{D}$ systems of Schr\"odinger type. The central
	result is a lifting lemma: if the transverse phase functions are
	harmonic and the trapping potential exactly cancels the squared
	phase gradient, the full $(3+1)\mathrm{D}$ system reduces
	identically to a closed $(1+1)\mathrm{D}$ integrable hierarchy.
	Every solution of the reduced system then lifts to an exact
	solution of the original multidimensional model via the conformal
	ansatz $\psi_n = A_n(t,x)\,e^{iv_n(y,z)}$. This construction
	inherently precludes wave collapse: the density $|\psi_n|^2 =
	|A_n(t,x)|^2$ is bounded by the one-dimensional amplitude alone,
	independently of the transverse geometry.
	
	The framework is applied to four physically distinct systems.
	For the scalar Gross--Pitaevskii equation, Kuznetsov--Ma
	breathers~\cite{Kuznetsov1977, Ma1979} are embedded in
	$(3+1)\mathrm{D}$ geometries carrying Weierstrass and Jacobi
	vortex lattices. The resulting vortex filaments possess a
	fundamental distinction from conventional Bose--Einstein
	condensate vortices: the density remains strictly finite at the
	cores, since the singular kinetic energy of the phase gradient
	(scaling as $+1/r^2$) is exactly cancelled by a matching singular
	attractive trapping potential (scaling as $-1/r^2$).
	For the two-component Manakov system, a phase-inversion ansatz
	yields exact vector solutions with identically vanishing mass
	current and non-trivial transverse spin current modulated by the
	longitudinal breather.
	For the three-component spinor $F=1$ Bose--Einstein condensate
	governed by the Ieda--Miyakawa--Wadati
	system~\cite{Ieda2004PRL, Ieda2004JPSJ, Li2018}, two exact
	solution families are constructed: a symmetric Kuznetsov--Ma
	breather on the fully magnetized background, and a spin-exchange
	rogue wave on the unmagnetized background exhibiting transient
	density amplification by a factor of nine in the $m_F = 0$
	channel.
	For the Maxwell--Bloch system governing resonant pulse propagation
	in doped dielectric waveguides and ferromagnetic
	films~\cite{McCall1967, Ablowitz1981}, self-induced transparency
	solitons, two-soliton elastic collisions, and Kuznetsov--Ma
	breathers are lifted to full $(3+1)\mathrm{D}$ geometry. In all
	cases the population inversion remains transversely uniform
	despite arbitrary phase winding in the beam cross-section, a
	direct consequence of the factorized ansatz.

%%% NEW NEW

\section{Harmonic reduction and lifting map.}
\label{sec:reduction}

Consider a $(3+1)\mathrm{D}$ system of coupled nonlinear evolution equations of Schrödinger type, operating under the action of independent, component-wise transverse trapping potentials $V_n(y,z)$:
\begin{equation}
	\left( i \frac{\partial}{\partial t} + \frac{1}{2} \nabla^2 
	- V_n(y,z) \right) \psi_n 
	+ \mathbf{F}_n(\vec{\psi}, \vec{\psi}^*) = 0,
	\qquad n = 1, \dots, N,
	\label{eq:full_system}
\end{equation}
where $\nabla^2 = \partial^2_x + \Delta_{\perp}$ is the full three-dimensional Laplacian, $\Delta_{\perp} = \partial^2_y + \partial^2_z$ is the transverse Laplacian, and $\mathbf{F}_n$ represents an arbitrary nonlinear interaction operator.

We seek exact solutions confined within a specific spatial domain $\mathbb{R}^3$. We propose a separable, phase-modulated ansatz of the form
\begin{equation}
	\psi_n(t,x,y,z) = A_n(t,x)\, e^{i v_n(y,z)},
	\qquad n = 1, \dots, N,
	\label{eq:ansatz}
\end{equation}
where the complex amplitudes $A_n$ depend exclusively on the longitudinal and temporal coordinates $(t,x)$, while the real transverse phase functions $v_n(y,z)$ are  harmonic
\begin{equation}
	\Delta_{\perp} v_n 
	\equiv \frac{\partial^2 v_n}{\partial y^2} 
	+ \frac{\partial^2 v_n}{\partial z^2} = 0,
	\label{eq:harmonic}
\end{equation}

To ensure that the nonlinear interaction does not reintroduce the transverse coordinates into the longitudinal dynamics, we assume that the nonlinear operator satisfies the componentwise gauge covariance property
\begin{equation}
	\mathbf{F}_n\!\left(\vec{A} e^{i\vec{v}},\,
	\vec{A}^* e^{-i\vec{v}}\right)
	= \mathbf{F}_n^{(1\mathrm{D})}(\vec{A}, \vec{A}^*)\,
	e^{i v_n(y,z)}.
	\label{eq:homogeneity}
\end{equation}
This property means that each nonlinear term transforms covariantly under independent local phase shifts of the field components and therefore preserves the separable structure of the ansatz.
 In particular, for the standard cubic cross-phase modulation interaction $\mathbf{F}_n = \sum_m g_{nm}|\psi_m|^2 \psi_n$, substitution yields
\begin{equation}
	\mathbf{F}_n\!\left(\vec{A} e^{i\vec{v}},\vec{A}^* e^{-i\vec{v}}\right)
	= \sum_m g_{nm}|A_m e^{i v_m}|^2 A_n e^{i v_n}
	= \sum_m g_{nm}|A_m|^2 A_n \cdot e^{i v_n}.
	\label{eq:cubic_check}
\end{equation}
Thus, $\mathbf{F}_n^{(1\mathrm{D})} = \sum_m g_{nm}|A_m|^2 A_n$, which coincides with the required structural condition~\eqref{eq:homogeneity}.

Finally, we assume that each  external potential balances the squared gradient of its corresponding phase function
% up to a constant chemical-potential shift $\rho_n \in\mathbb{R}$
\begin{equation}
	V_n(y,z) = -\frac{1}{2}|\nabla_{\perp} v_n|^2 - \rho_n.
	\label{eq:potential}
\end{equation}
The constants $\rho_n$ define real spectral shift parameters.

\noindent
\textbf{Lemma (Harmonic reduction).}
\label{lem:reduction}
Under the conditions~\eqref{eq:harmonic}--\eqref{eq:potential}, the full multidimensional $(3+1)\mathrm{D}$ system~\eqref{eq:full_system} reduces identically within the domain $\mathbb{R}^3$ to the closed $(1+1)\mathrm{D}$ system:
\begin{equation}
	i \frac{\partial A_n}{\partial t}
	+ \frac{1}{2} \frac{\partial^2 A_n}{\partial x^2}
	+ \mathbf{F}_n^{(1\mathrm{D})}(\vec{A}, \vec{A}^*)
	+ \rho_n A_n = 0.
	\label{eq:reduced_system}
\end{equation}

\medskip
\noindent\textbf{Proof.}
Since $A_n = A_n(t,x)$, all transverse derivatives vanish:
$\nabla_{\perp} A_n = 0$. Applying the product rule for $\Delta_{\perp}$ gives:
% Evaluating the action of the full spatial Laplacian operator $\nabla^2 = \partial^2_x + \Delta_{\perp}$ on the ansatz~\eqref{eq:ansatz} via the standard differentiation product rule yields:
%Since $A_n = A_n(t,x)$, all transverse derivatives of the amplitude 
%vanish: $\nabla_{\perp} A_n = 0$. 
%Applying the product rule for $\Delta_{\perp}$ gives:
\begin{equation}
	\Delta_{\perp}(A_n e^{iv_n}) 
	= (\Delta_{\perp} A_n) e^{iv_n} + 2 (\nabla_{\perp} A_n) \cdot (\nabla_{\perp} e^{iv_n}) + A_n \Delta_{\perp}(e^{iv_n}).
	\label{eq:leibniz}
\end{equation}
%Since the first two terms on the right-hand side vanish (because $\nabla_{\perp} A_n = 0$ and $\Delta_{\perp} A_n = 0$), this simplifies to $\Delta_{\perp}(A_n e^{iv_n}) = A_n \Delta_{\perp}(e^{iv_n})$, and 
Consequently the total Laplacian acting on the ansatz is
\begin{equation}
	\nabla^2(A_n e^{iv_n}) 
	= (\partial_x^2 A_n)\,e^{iv_n} + A_n\,\Delta_{\perp}(e^{iv_n}).
	\label{eq:laplacian_expansion_step1}
\end{equation}
Expanding the transverse Laplacian of the exponential phase factor leads to:
\begin{equation}
	\Delta_{\perp} \left( e^{i v_n} \right) 
	= \left( i \Delta_{\perp} v_n - |\nabla_{\perp} v_n|^2 \right) e^{i v_n}.
	\label{eq:laplacian_expansion_step2}
\end{equation}
Substituting the strict harmonicity condition~\eqref{eq:harmonic} into Eq.~\eqref{eq:laplacian_expansion_step2}, the imaginary part vanishes identically for all $(y,z)\in\mathbb{R}^2$. Hence, the total Laplacian simplifies to a purely real expression:
\begin{equation}
	\nabla^2 \psi_n = \left( \frac{\partial^2 A_n}{\partial x^2} - |\nabla_{\perp} v_n|^2 A_n \right) e^{i v_n}.
	\label{eq:laplacian_expansion}
\end{equation}
Substituting the reduced Laplacian~\eqref{eq:laplacian_expansion} and the gauge homogeneity property~\eqref{eq:homogeneity} back into the initial equations~\eqref{eq:full_system}, we factor out the non-vanishing exponential terms $e^{i v_n} \neq 0$:
\begin{equation}
	i \frac{\partial A_n}{\partial t}
	+ \frac{1}{2}\frac{\partial^2 A_n}{\partial x^2}
	- \frac{1}{2}|\nabla_{\perp} v_n|^2 A_n
	- V_n(y,z)\,A_n
	+ \mathbf{F}_n^{(1\mathrm{D})}(\vec{A}, \vec{A}^*)
	= 0.
	\label{eq:factored_form}
\end{equation}
The transverse phase gradients generate an effective kinetic-energy contribution
\[
\frac12|\nabla_\perp v_n|^2,
\]
which acts as a geometric potential in the reduced equations. We therefore choose the potential to compensate this contribution exactly:
%Finally, substituting the explicit profile of the trapping potential~\eqref{eq:potential} into Eq.~\eqref{eq:factored_form} leads to the exact cancellation of the gradient terms:
\begin{equation}
	-\frac{1}{2}|\nabla_{\perp} v_n|^2 A_n - \left( -\frac{1}{2}|\nabla_{\perp} v_n|^2 - \rho_n \right) A_n = \rho_n A_n.
	\label{eq:cancellation}
\end{equation}
This reduces~\eqref{eq:factored_form} to~\eqref{eq:reduced_system}.\qed

The proof remains valid for arbitrary harmonic phase functions,
including single-valued branches of multivalued harmonic functions,
provided they are restricted to simply connected domains excluding
their singular points.

\noindent
{\bf Corollary (Exact lifting operator).}

The mapping
\[
\mathcal{L}:
\{A_n\}
\longmapsto
\{A_n e^{iv_n}\}
\]
defines an exact lifting operator from the solution space of the reduced
$(1+1)$-dimensional system \eqref{eq:reduced_system} into the solution
space of the original multidimensional system \eqref{eq:full_system}.
Consequently, every exact solution of the reduced model generates an
exact solution of the full $(3+1)$-dimensional equations.

\section{Scalar Gross--Pitaevskii Equation: Exact Solutions}
\label{sec:scalar_GP}

As a first application of the harmonic lifting theorem, we consider the
scalar Gross--Pitaevskii equation
\begin{equation}
	i \frac{\partial \psi}{\partial t} 
	+ \frac{1}{2}\nabla^2\psi
	- V(y,z)\,\psi 
	+ |\psi|^2 \psi = 0,
	\label{eq:scalar_GP}
\end{equation}
where the nonlinear coefficient is set to $\gamma = 1$ (self-focusing medium or attractive BEC).
The reduction established in Section~2 immediately yields an exact
correspondence between the multidimensional GP equation and the standard
one-dimensional focusing nonlinear Schrödinger equation.

Applying Lemma with the scalar ansatz $\psi = A(t,x)\,e^{iv(y,z)}$, where $\Delta_\perp v = 0$ and $V(y,z) = -\tfrac{1}{2}|\nabla_\perp v|^2 - \rho$, the cubic nonlinearity reduces as
\begin{equation}
	|\psi|^2\psi = |A|^2 A \cdot e^{iv},
	\label{eq:nonlinearity_reduction}
\end{equation}
confirming the $U(1)$-equivariance condition~\eqref{eq:homogeneity} for $\mathbf{F}^{(1\mathrm{D})} = |A|^2 A$.
The $(3+1)\mathrm{D}$ GP equation thereby reduces exactly to the $(1+1)\mathrm{D}$ NLS equation with a constant energy shift:
\begin{equation}
	i \frac{\partial A}{\partial t} 
	+ \frac{1}{2}\frac{\partial^2 A}{\partial x^2} 
	+ |A|^2 A + \rho A = 0.
	\label{eq:reduced_NLSE}
\end{equation}

\begin{proposition}
	\label{prop:gauge}
	The gauge transformation
	\[
	A=e^{i\rho t}\Psi
	\]
	removes the constant energy shift from
	\eqref{eq:reduced_NLSE}
	and transforms it into the standard focusing NLS equation
	$$   	i \frac{\partial \Psi}{\partial t} 
	+ \frac{1}{2}\frac{\partial^2  \Psi}{\partial x^2} + |\Psi|^2  \Psi  = 0.$$
\end{proposition}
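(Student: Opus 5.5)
The plan is a direct substitution of $A=e^{i\rho t}\Psi$ into \eqref{eq:reduced_NLSE}, followed by cancellation of the common phase factor. Since $\rho$ is a real constant, the factor $e^{i\rho t}$ depends only on $t$. It is therefore transparent to $x$-derivatives, and its modulus equals one.

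First I compute the time derivative by the product rule:
\[
\partial_t A = e^{i\rho t}\left(\partial_t\Psi + i\rho\Psi\right),
\qquad\text{so}\qquad
i\,\partial_t A = e^{i\rho t}\left(i\,\partial_t\Psi - \rho\Psi\right).
\]
Because the factor is independent of $x$, we have $\partial_x^2 A = e^{i\rho t}\,\partial_x^2\Psi$. Because $\rho\in\mathbb{R}$, we have $|e^{i\rho t}|=1$. Hence $|A|^2A = |\Psi|^2\Psi\,e^{i\rho t}$, which is the same gauge-covariance mechanism used in \eqref{eq:homogeneity}, now applied to a global time-dependent phase. Trivially, $\rho A = \rho\,e^{i\rho t}\Psi$.

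Next I substitute these four expressions into \eqref{eq:reduced_NLSE}. The left-hand side becomes $e^{i\rho t}$ times
\[
i\,\partial_t\Psi - \rho\Psi + \tfrac12\partial_x^2\Psi + |\Psi|^2\Psi + \rho\Psi .
\]
The two $\rho\Psi$ terms cancel. Dividing by the nonvanishing factor $e^{i\rho t}$ then yields the standard focusing NLS equation.

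The converse direction follows because the transformation is invertible via $\Psi=e^{-i\rho t}A$. Consequently, solutions of the two equations correspond bijectively.

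There is no real obstacle. The only point needing care is the sign convention: the term $-\rho\Psi$ produced by $i\partial_t$ must cancel $+\rho A$. This holds precisely because the exponent is $+i\rho t$, which matches the sign with which $\rho$ enters \eqref{eq:potential}. It also requires $\rho$ to be real, so that the cubic term is unaffected.
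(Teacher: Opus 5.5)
Your proof is correct and takes the same route as the paper, which states only that the result follows by direct substitution. Writing out $i\partial_t A = e^{i\rho t}(i\partial_t\Psi - \rho\Psi)$, cancelling this against $+\rho A$, and using $|e^{i\rho t}|=1$ for the cubic term is exactly that substitution made explicit.
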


The proof is carried out by direct substitution.

Consequently every exact solution of the focusing NLS immediately
produces an exact solution of the multidimensional GP equation.

As an illustration, we employ the classical
Kuznetsov--Ma breather
\cite{Kuznetsov1977,Ma1979,Akhmediev1987},
which is an exact solution of the focusing NLS equation.
Applying the gauge transformation of Proposition~3.1 to the  Kuznetsov--Ma breather yields the exact solution of equation~\eqref{eq:reduced_NLSE}:

\begin{equation}
	A(t,x)
	=
	\sqrt{2}
	\left[
	1-
	\frac{a\bigl(a\cos(\Omega t)
		+i\sqrt{a^{2}+8}\,\sin(\Omega t)\bigr)}
	{\sqrt{2}\sqrt{a^{2}+8}\,\cosh(ax)
		-4\cos(\Omega t)}
	\right]
	e^{i(\rho+2)t},
	\label{eq:KM_general}
\end{equation}
where
\begin{equation}
	\Omega
	=
	\frac{a}{2}\sqrt{a^{2}+8},
	\qquad
	a>0.
	\label{eq:dispersion}
\end{equation}

The solution is periodic in time and exponentially
localized in the longitudinal coordinate $x$.
Its background amplitude equals $\sqrt{2}$,
whereas the parameter $\rho$ affects only the
overall gauge phase factor $e^{i\rho t}$ and
does not modify the breather profile.

\textbf{Regularity.}
Since $\cosh(ax)\ge 1$ for all $x\in\mathbb{R}$ and
$\cos(\Omega t)\le 1$ for all $t$, the denominator satisfies
\begin{equation}
	\sqrt{2}\sqrt{a^{2}+8}\,\cosh(ax)-4\cos(\Omega t)
	\;\ge\;
	\sqrt{2}\sqrt{a^{2}+8}-4
	\;>\;0
	\qquad\text{for all } a>0,
	\label{eq:denominator_bound}
\end{equation}
so the breather~\eqref{eq:KM_general} is globally smooth and
bounded for all $(t,x)\in\mathbb{R}^2$.

\textbf{Peak density.}
At the breather center $(t,x)=(0,0)$, where $\cos(\Omega t)=1$
and $\cosh(ax)=1$, the amplitude reduces to
\begin{equation}
	A(0,0)
	=
	\sqrt{2}\left(1-\frac{a^{2}}{\sqrt{2}\sqrt{a^{2}+8}-4}\right).
	\label{eq:peak_amplitude}
\end{equation}
For $a=2$ this gives the explicit peak density
\begin{equation}
	|A(0,0)|^{2}
	=
	14+4\sqrt{6}
	\;\approx\;
	23.80,
	\label{eq:peak_density}
\end{equation}
which exceeds the background density $|A_{\infty}|^{2}=2$ by
more than an order of magnitude, while remaining strictly
finite for all $a>0$ by~\eqref{eq:denominator_bound}, confirming
the absence of wave collapse.

The Kuznetsov--Ma breather serves only as an illustrative example.
Any exact solution of the focusing NLS equation can be lifted
to an exact solution of the multidimensional Gross--Pitaevskii
equation by Proposition~3.1.

Combining the longitudinal amplitude~\eqref{eq:KM_general} with different harmonic phase functions $v(y,z)$ yields distinct $(3+1)\mathrm{D}$ solutions $\psi = A(t,x)\,e^{iv(y,z)}$.
Let $w = y + iz$ and let $\sigma(w)$ denote the Weierstrass sigma-function with a prescribed hexagonal period lattice $\Lambda$~\cite{WhittakerWatson}. Setting $v(y,z) = \mathrm{Im}[\ln\sigma(w)]$, which is harmonic on $\mathbb{R}^2 \setminus \Lambda$, the gradient magnitude is
\begin{equation}
	|\nabla_\perp v|^2 
	= \left|\frac{\sigma'(w)}{\sigma(w)}\right|^2
	= |\zeta_W(w)|^2,
	\label{eq:weierstrass_gradient}
\end{equation}
where $\zeta_W(w) = \sigma'(w)/\sigma(w)$ is the Weierstrass zeta-function (distinct from the complex variable $w = y+iz$).
The corresponding  potential is
\begin{equation}
	V(y,z) = -\tfrac{1}{2}|\zeta_W(y+iz)|^2 - \rho,
	\label{eq:weierstrass_potential}
\end{equation}
and the exact $(3+1)\mathrm{D}$ solution is
\begin{equation}
	\psi(t,x,y,z) = A(t,x)\,
	\exp\!\bigl(i\,\mathrm{Im}[\ln\sigma(y+iz)]\bigr),
	\label{eq:weierstrass_solution}
\end{equation}
with $A(t,x)$ given by~\eqref{eq:KM_general}   and the domain of validity $\Omega_\perp = \mathbb{R}^2\setminus\Lambda$.
Since $|e^{iv}| = 1$ identically, the density $|\psi|^2 = |A(t,x)|^2$ is independent of the transverse coordinates and admits a finite limit at every lattice point.
Unlike conventional GP vortices, whose cores are accompanied by density
depletion, the present phase singularities are supported by the
potential and therefore do not require vanishing of
the condensate density.
The topological charge is carried solely by the multivalued phase rather
than by a density defect.

The vortex structure is encoded entirely in the phase of $\psi$: near each core $w_0 \in \Lambda$, the phase winds by $2\pi$, carrying unit topological charge~\cite{Nazarenko2006}, while the density profile along $x$ is governed solely by the KM breather~\eqref{eq:KM_general} .
The construction therefore separates the longitudinal nonlinear dynamics
from the transverse topological structure: the breather evolution is
governed entirely by the one-dimensional NLS dynamics, whereas the
vortex lattice is completely determined by the harmonic phase profile.

\section{Two-Component Vector Manakov System: 
	Phase Inversion and Spin-Current Transport}
\label{sec:manakov}

We extend the framework to the two-component $(3+1)\mathrm{D}$ 
vector Manakov system~\cite{Manakov1974}, where the two fields interact via the 
total intensity:
\begin{align}
	i \frac{\partial \psi_1}{\partial t} 
	+ \frac{1}{2}\nabla^2\psi_1 - V_1(y,z)\psi_1 
	+ \left(|\psi_1|^2 + |\psi_2|^2\right)\psi_1 &= 0, 
	\label{eq:Manakov_1} \\
	i \frac{\partial \psi_2}{\partial t} 
	+ \frac{1}{2}\nabla^2\psi_2 - V_2(y,z)\psi_2 
	+ \left(|\psi_1|^2 + |\psi_2|^2\right)\psi_2 &= 0.
	\label{eq:Manakov_2}
\end{align}
The two components may represent two hyperfine states of a spinor
Bose--Einstein condensate or, equivalently, two orthogonal polarization
components of an optical field propagating in a Kerr medium.

We apply ansatz~\eqref{eq:ansatz} with the specific choice 
$v_1 = v$ and $v_2 = -v$, where the shared phase function 
$v(y,z)$ is strictly harmonic ($\Delta_\perp v = 0$):
\begin{equation}
	\psi_1 = A_1(t,x)\,e^{iv(y,z)}, \qquad
	\psi_2 = A_2(t,x)\,e^{-iv(y,z)}.
	\label{eq:phase_inverted_ansatz}
\end{equation}
The opposite phase choice is essential: it reverses the transverse
currents of the two components while preserving the same trapping
potential because the potential depends only on
$|\nabla_\perp v|^2$.
Both 
components are confined by a single common  potential:
\begin{equation}
	V_1(y,z) \equiv V_2(y,z) 
	= -\frac{1}{2}|\nabla_\perp v|^2 - \rho.
	\label{eq:uniform_potential}
\end{equation}
Thus, opposite transverse circulations are generated without introducing
component-dependent  potentials.

\subsection{Dimensional Reduction}
\label{subsec:manakov_reduction}

Substituting~\eqref{eq:phase_inverted_ansatz} into the 
total intensity gives
\begin{equation}
	|\psi_1|^2 + |\psi_2|^2 
	= |A_1|^2 + |A_2|^2,
	\label{eq:total_intensity}
\end{equation}
since $|e^{\pm iv}| = 1$ identically. The nonlinear operators 
therefore satisfy
\begin{equation}
	\mathbf{F}_1 = \left(|A_1|^2+|A_2|^2\right)A_1\cdot e^{iv},
	\qquad
	\mathbf{F}_2 = \left(|A_1|^2+|A_2|^2\right)A_2\cdot e^{-iv},
	\label{eq:interaction_operators}
\end{equation}
confirming the $U(1)$-equivariance condition~\eqref{eq:homogeneity} 
for each component (with phase factors $e^{iv}$ and $e^{-iv}$ 
respectively).
Applying Lemma,  the full 
$(3+1)\mathrm{D}$ system reduces exactly to
\begin{align}
	i\frac{\partial A_1}{\partial t} 
	+ \frac{1}{2}\frac{\partial^2 A_1}{\partial x^2} 
	+ \left(|A_1|^2+|A_2|^2\right)A_1 + \rho A_1 &= 0, 
	\label{eq:reduced_Manakov_1}\\
	i\frac{\partial A_2}{\partial t} 
	+ \frac{1}{2}\frac{\partial^2 A_2}{\partial x^2} 
	+ \left(|A_1|^2+|A_2|^2\right)A_2 + \rho A_2 &= 0.
	\label{eq:reduced_Manakov_2}
\end{align}
Hence every exact solution of the one-dimensional Manakov system admits
an exact lifting to the full multidimensional model.

\subsection{Exact Vector Solution}
\label{subsec:manakov_solution}

As a particularly simple family of exact solutions we consider the
constant-polarization reduction
system~\eqref{eq:reduced_Manakov_1}--\eqref{eq:reduced_Manakov_2} 
in the factored form
\begin{equation}
	A_1(t,x) = c_1\,B(t,x), \qquad A_2(t,x) = c_2\,B(t,x),
	\label{eq:polarization_factoring}
\end{equation}
where $c_1, c_2$ are real constants. Substituting 
into~\eqref{eq:reduced_Manakov_1}, both equations reduce to 
the single scalar NLS equation
\begin{equation}
	i\frac{\partial B}{\partial t} 
	+ \frac{1}{2}\frac{\partial^2 B}{\partial x^2} 
	+ |B|^2 B + \rho B = 0,
	\label{eq:scalar_B}
\end{equation}
provided $c_1^2 + c_2^2 = 1$, which ensures the background 
density normalization $|A_1|^2 + |A_2|^2 = |B|^2$ (hereafter 
we set $\rho=1$ for concreteness). 
Equation~\eqref{eq:scalar_B} is identical 
to~\eqref{eq:reduced_NLSE}, so $B(t,x)$ is given by the 
KM breather~\eqref{eq:KM_general}.

For the transverse phase we choose 
$v(y,z) = \mathrm{Im}[\ln\,\mathrm{sn}(w,k)]$, where 
$w = y+iz$ and $\mathrm{sn}(w,k)$ is the Jacobi elliptic 
sine function with modulus $k$. Since $\mathrm{sn}(w,k)$ is 
meromorphic in $w$, its logarithm is analytic away from zeros 
and poles, so
$v$ is harmonic on $\mathbb{R}^2\setminus(\mathcal{Z}\cup\mathcal{P})$, 
where $\mathcal{Z}$ and $\mathcal{P}$ denote the zero and pole 
lattices of $\mathrm{sn}(\,\cdot\,,k)$, respectively, both 
forming rectangular sublattices of the period lattice.
In contrast to the hexagonal Weierstrass lattice, the Jacobi 
construction yields a rectangular vortex--antivortex lattice 
whose symmetry is controlled by the modulus $k\in(0,1)$.

The exact $(3+1)\mathrm{D}$ vector solution is
\begin{align}
	\psi_1(t,x,y,z) &= c_1\,B(t,x)\,
	e^{i\,\mathrm{Im}[\ln\,\mathrm{sn}(y+iz,\,k)]},
	\label{eq:vector_psi_1}\\
	\psi_2(t,x,y,z) &= c_2\,B(t,x)\,
	e^{-i\,\mathrm{Im}[\ln\,\mathrm{sn}(y+iz,\,k)]},
	\label{eq:vector_psi_2}
\end{align}
with $B(t,x)$ given by~\eqref{eq:KM_general} and 
$c_1^2+c_2^2=1$.  This corresponds to a fixed polarization state of the Manakov system.

\subsection{Mass and Spin Currents}
\label{subsec:currents}

The transverse probability current of each component is 
$\vec{j}_n = \mathrm{Im}[\psi_n^*\nabla_\perp\psi_n]$.
Direct computation using~\eqref{eq:phase_inverted_ansatz} gives
\begin{equation}
	\vec{j}_1 = |A_1|^2\,\nabla_\perp v, \qquad
	\vec{j}_2 = -|A_2|^2\,\nabla_\perp v.
	\label{eq:currents_individual}
\end{equation}
The total mass current and the spin current are defined as 
$\vec{j}_\text{mass} = \vec{j}_1 + \vec{j}_2$ and 
$\vec{j}_\text{spin} = \vec{j}_1 - \vec{j}_2$, respectively. 
Setting $c_1 = c_2 = 1/\sqrt{2}$, so that 
$|A_1|^2 = |A_2|^2 = \tfrac{1}{2}|B|^2$, these become
\begin{align}
	\vec{j}_\text{mass} &= \left(|A_1|^2 - |A_2|^2\right)
	\nabla_\perp v = 0,
	\label{eq:mass_current_zero}\\
	\vec{j}_\text{spin} &= \left(|A_1|^2 + |A_2|^2\right)
	\nabla_\perp v = |B|^2\,\nabla_\perp v.
	\label{eq:spin_current_double}
\end{align}
Thus, the mass current vanishes identically, while the spin 
current remains non-zero and is modulated in time by the 
longitudinal KM breather through the factor $|B(t,x)|^2$. 
The transverse spin-current pattern follows the gradient 
field $\nabla_\perp v$ of the Jacobi lattice, 
forming a stationary rectangular lattice of vortex--antivortex
pairs in the two spin channels, 
periodically modulated in time and with the current amplitude governed by the longitudinal KM breather.

The resulting state realizes a pure spin-current configuration:
the condensate exhibits no net transverse particle transport,
whereas opposite circulations of the two components generate a finite
spin flow.
The longitudinal nonlinear dynamics and the transverse spin transport
remain exactly separable:
the KM breather governs the temporal modulation of the current amplitude,
while the Jacobi phase determines its spatial topology.

Pure spin-current states are of particular interest in spinor
Bose--Einstein condensates because they transport spin without
net particle transport.
The present construction therefore yields an exact family of
multidimensional pure spin-current solutions of the Manakov system.

\section{Spinor $F=1$ Bose--Einstein Condensates: 
	Spin-Exchange Dynamics and Exact Solutions}
\label{sec:spinor}

We extend the reduction framework to a three-component
spinor $F=1$ BEC described by the wave-function vector
$(\psi_{+1}, \psi_0, \psi_{-1})^T$, where the index
denotes the magnetic quantum number $m_F = 0, \pm 1$.
Physical realizations include $^{87}$Rb and $^{23}$Na
atomic gases in optical traps, as well as collective
magnon excitations in ferromagnetic insulators.

In the mean-field approximation with attractive
interactions and ferromagnetic spin-exchange
($c_0 = c_2 = 1$), the three components satisfy the
coupled Gross--Pitaevskii
equations~\cite{Ho1998, OhmiMachida1998, Ieda2004PRL, Ieda2004JPSJ}:

\begin{align}
	i\frac{\partial\psi_{\pm1}}{\partial t}
	+ \frac{1}{2}\nabla^2\psi_{\pm1}
	- V_{\pm1}(y,z)\psi_{\pm1}
	&+\left(|\psi_{\pm1}|^2 + |\psi_0|^2
	+ 2|\psi_{\mp1}|^2\right)\psi_{\pm1}
	\nonumber\\
	&+ \psi_0^2\psi_{\mp1}^* = 0,
	\label{eq:spinor_pm}\\[4pt]
	i\frac{\partial\psi_0}{\partial t}
	+ \frac{1}{2}\nabla^2\psi_0
	- V_0(y,z)\psi_0
	&+\left(|\psi_{+1}|^2 + |\psi_0|^2
	+ |\psi_{-1}|^2\right)\psi_0
	\nonumber\\
	&+ 2\psi_{+1}\psi_{-1}\psi_0^* = 0.
	\label{eq:spinor_0}
\end{align}
This system corresponds to the integrable case
$c_0 = c_2 = 1$, which admits a $3\times3$ matrix
Lax pair~\cite{Tsuchida1998}.
The spin-exchange terms $\psi_0^2\psi_{\mp1}^*$ and
$2\psi_{+1}\psi_{-1}\psi_0^*$ describe coherent
two-body collisions in which two $m_F=0$ atoms scatter
into the $m_F=\pm1$ levels and vice versa, conserving
the total magnetization $M = N_{+1} - N_{-1}$.

\subsection{Conformal Reduction}
\label{subsec:spinor_reduction}

We apply ansatz~\eqref{eq:ansatz} with
$v_{+1} = v$, $v_0 = 0$, $v_{-1} = -v$:
\begin{equation}
	\psi_{+1} = A_{+1}(t,x)\,e^{iv(y,z)}, \quad
	\psi_0    = A_0(t,x), \quad
	\psi_{-1} = A_{-1}(t,x)\,e^{-iv(y,z)},
	\label{eq:spinor_ansatz}
\end{equation}
where $v(y,z)$ is harmonic ($\Delta_\perp v = 0$).
The choice $v_0 = 0$ reflects the fact that the
$m_F=0$ component carries no orbital angular momentum.

We verify the $U(1)$-equivariance condition~\eqref{eq:homogeneity}
for every nonlinear term.
For the $\psi_{+1}$ equation:
\begin{align}
	|\psi_{+1}|^2\psi_{+1}
	&= |A_{+1}|^2 A_{+1}\cdot e^{iv}, \\
	|\psi_0|^2\psi_{+1}
	&= |A_0|^2 A_{+1}\cdot e^{iv}, \\
	2|\psi_{-1}|^2\psi_{+1}
	&= 2|A_{-1}|^2 A_{+1}\cdot e^{iv}, \\
	\psi_0^2\psi_{-1}^*
	&= A_0^2 A_{-1}^*\cdot e^{iv}.
\end{align}
All four terms factor onto $e^{iv}$,
confirming~\eqref{eq:homogeneity} for the $+1$ component.
For the $\psi_0$ equation:
\begin{align}
	|\psi_{\pm1}|^2\psi_0
	&= |A_{\pm1}|^2 A_0, \\
	|\psi_0|^2\psi_0
	&= |A_0|^2 A_0, \\
	2\psi_{+1}\psi_{-1}\psi_0^*
	&= 2A_{+1}e^{iv}\cdot A_{-1}e^{-iv}\cdot A_0^*
	= 2A_{+1}A_{-1}A_0^*.
\end{align}
All terms are independent of $(y,z)$,
confirming~\eqref{eq:homogeneity} for the $0$ component
with trivial phase factor $e^{i\cdot0}=1$.
The phase-matching condition
$v_{+1} + v_{-1} = 2v_0$, i.e.\ $v+(-v)=0$,
is satisfied by construction and reflects conservation
of total magnetization under spin-exchange collisions.

Since $|\nabla_\perp(\pm v)|^2 = |\nabla_\perp v|^2$,
the $m_F=\pm1$ components share a common trapping
potential, while the $m_F=0$ component requires only
a constant energy shift:
\begin{equation}
	V_{\pm1}(y,z) = -\frac{1}{2}|\nabla_\perp v|^2
	- \rho_1,
	\qquad
	V_0(y,z) = -\rho_0.
	\label{eq:spinor_potentials}
\end{equation}
Applying Lemma,
the full $(3+1)\mathrm{D}$
system~\eqref{eq:spinor_pm}--\eqref{eq:spinor_0}
reduces exactly to the closed $(1+1)\mathrm{D}$ system:
\begin{align}
	i\frac{\partial A_{\pm1}}{\partial t}
	+ \frac{1}{2}\frac{\partial^2 A_{\pm1}}{\partial x^2}
	&+\left(|A_{\pm1}|^2 + |A_0|^2
	+ 2|A_{\mp1}|^2\right)A_{\pm1}
	\nonumber\\
	&+ A_0^2 A_{\mp1}^*
	+ \rho_1 A_{\pm1} = 0,
	\label{eq:reduced_spinor_pm}\\[4pt]
	i\frac{\partial A_0}{\partial t}
	+ \frac{1}{2}\frac{\partial^2 A_0}{\partial x^2}
	&+\left(|A_{+1}|^2 + |A_0|^2
	+ |A_{-1}|^2\right)A_0
	\nonumber\\
	&+ 2A_{+1}A_{-1}A_0^*
	+ \rho_0 A_0 = 0.
	\label{eq:reduced_spinor_0}
\end{align}
This is the integrable $(1+1)\mathrm{D}$
Tsuchida--Wadati system~\cite{Tsuchida1998, Ieda2004PRL, Ieda2004JPSJ}
with constant background shifts $\rho_{0,1}$.

\subsection{Plane-Wave Background}
\label{subsec:spinor_background}

The two limiting plane-wave backgrounds used below are:
\begin{equation}
	A_{\pm1}^{(0)} = \alpha\,e^{i(3\alpha^2+\rho_1)t},
	\quad A_0^{(0)} = 0
	\qquad (\text{fully magnetized, } \beta=0),
	\label{eq:background_mag}
\end{equation}
\begin{equation}
	A_{\pm1}^{(0)} = 0,
	\quad
	A_0^{(0)} = \beta\,e^{i(\beta^2+\rho_0)t}
	\qquad (\text{unmagnetized, } \alpha=0).
	\label{eq:background_unmag}
\end{equation}
Direct substitution
into~\eqref{eq:reduced_spinor_pm}--\eqref{eq:reduced_spinor_0}
confirms that each is an exact solution in its respective limit.
For general $\alpha\beta\neq 0$ the spin-exchange term
$A_0^2 A_{\mp1}^*$ generates inter-component beating and
a uniform background does not exist.

\subsection{Solution I: Symmetric KM Breather}
\label{subsec:spinor_sym}

Setting $\beta=0$ (fully magnetized background,
$A_0^{(0)}=0$) and imposing the symmetry
$A_{+1}=A_{-1}\equiv B$, both
equations~\eqref{eq:reduced_spinor_pm} reduce to
the single scalar NLS equation
\begin{equation}
	i\frac{\partial B}{\partial t}
	+ \frac{1}{2}\frac{\partial^2 B}{\partial x^2}
	+ 3|B|^2 B + \rho_1 B = 0,
	\label{eq:sym_NLS}
\end{equation}
while equation~\eqref{eq:reduced_spinor_0} with
$A_0=0$ is satisfied trivially.
The substitution $B = \Psi/\sqrt{3}$ maps
\eqref{eq:sym_NLS} to the standard NLS equation
$i\Psi_t + \tfrac{1}{2}\Psi_{xx} + |\Psi|^2\Psi = 0$
on background $|\Psi|=\sqrt{3}\,\alpha$.
Applying the KM breather~\eqref{eq:KM_general}
with this rescaled background yields the exact
solution of~\eqref{eq:sym_NLS}:
\begin{equation}
	B(t,x)
	=
	\alpha
	\left[
	1-
	\frac{a\bigl(a\cos(\Omega_s t)
		+i\sqrt{a^{2}+8}\,\sin(\Omega_s t)\bigr)}
	{\sqrt{2}\sqrt{a^{2}+8}\,\cosh(\kappa_s x)
		-4\cos(\Omega_s t)}
	\right]
	e^{i(3\alpha^2+\rho_1)t},
	\label{eq:B_KM}
\end{equation}
where
\begin{equation}
	\Omega_s = \frac{3a\alpha^2}{4}\sqrt{a^{2}+8},
	\qquad
	\kappa_s = \frac{\sqrt{6}}{2}\,a\alpha,
	\qquad a > 0.
	\label{eq:Omega_s}
\end{equation}

\textbf{Regularity.}
The denominator
$D_s = \sqrt{2}\sqrt{a^2+8}\,\cosh(\kappa_s x) - 4\cos(\Omega_s t)$
satisfies
\begin{equation}
	D_s \geq \sqrt{2}\sqrt{a^{2}+8} - 4 > 0
	\qquad\text{for all } a>0,
\end{equation}
so the solution is globally smooth for all $\alpha>0$, $a>0$.
The peak amplitude at $(t,x)=(0,0)$ satisfies
\begin{equation}
	B(0,0)
	=
	\alpha\!\left(1 - \frac{a^2}{\sqrt{2}\sqrt{a^2+8}-4}\right).
	\label{eq:peak_sym}
\end{equation}
The denominator bound~\eqref{eq:denominator_bound} ensures this
is finite for all $a>0$.

For $\kappa=2$, $\alpha=1$ one has
$\sqrt{8\alpha^2+\kappa^2}=2\sqrt{3}$, giving
\begin{equation}
	|B(0,0)|^2
	= \frac{1}{2}\!\left(\frac{2\sqrt{3}+2}{2\sqrt{3}-2}\right)^2
	= \frac{({\sqrt{3}+1})^2}{2(\sqrt{3}-1)^2}
	= \frac{7+4\sqrt{3}}{2}
	\approx 6.96.
\end{equation}

The full $(3+1)\mathrm{D}$ fields are
\begin{align}
	\psi_{\pm1}(t,x,y,z)
	&= B(t,x)\,
	e^{\pm i\,\mathrm{Im}[\ln\sigma(y+iz)]},
	\label{eq:sym_3D_pm}\\
	\psi_0(t,x,y,z)
	&= 0,
	\label{eq:sym_3D_0}
\end{align}
where $\sigma(w)$ is the Weierstrass sigma-function
with hexagonal lattice $\Lambda$
(Section~\ref{sec:scalar_GP}).
 The density $|\psi_{\pm1}|^2 = |B(t,x)|^2$ is
independent of $(y,z)$: the vortex structure is
encoded entirely in the phase, with $\psi_{+1}$
carrying unit positive topological charge and
$\psi_{-1}$ unit negative topological charge at
each lattice site $w\in\Lambda$.
The total magnetization $M = N_{+1}-N_{-1} = 0$
vanishes by symmetry.

\subsection{Solution II: Spin-Exchange Rogue Wave}
\label{subsec:spinor_rogue}

Setting $\alpha=0$ (fully unmagnetized background,
$A_{\pm1}^{(0)}=0$, $A_0^{(0)}=\beta e^{i\beta^2 t}$),
the first-order rational solution obtained via
Darboux transformation on this
background reads
\begin{align}
	A_{\pm1}(t,x)
	&= \frac{4\beta^2(x + 2i\beta^2 t)}
	{D(t,x)}\,
	e^{i(\beta^2+\rho_1)t},
	\label{eq:rogue_pm}\\[4pt]
	A_0(t,x)
	&= \beta\!\left[1
	- \frac{4(1 + 4i\beta^2 t)}
	{D(t,x)}\right]
	e^{i(\beta^2+\rho_0)t},
	\label{eq:rogue_0}
\end{align}
where
\begin{equation}
	D(t,x) = 4\beta^2 x^2
	+ 16\beta^4 t^2 + 1.
	\label{eq:D_def}
\end{equation}
Since $D(t,x) = 4\beta^2 x^2 + 16\beta^4 t^2 + 1 \geq 1$
for all $(t,x)\in\mathbb{R}^2$, the solution is
globally smooth and bounded.

\textbf{Peak amplitudes.}
At $(t,x)=(0,0)$:
\begin{equation}
	|A_{\pm1}(0,0)|^2 = 0,
	\qquad
	|A_0(0,0)|^2 = \beta^2|1-4|^2 = 9\beta^2.
	\label{eq:rogue_peak}
\end{equation}
The central component reaches three times the background
amplitude, $|A_0|_\mathrm{max} = 3\beta$, which is the
Peregrine amplification factor for the $2\times2$
matrix NLS~\cite{Li2018}.
The $m_F=\pm1$ components vanish at the origin and
acquire their maxima at 
$(t,x) = (0,\,\pm 1/(2\beta))$:
\begin{equation}
	|A_{\pm1}|^2_\mathrm{max}
	= \left.\frac{(4\beta^2 x)^2}{\left(4\beta^2 x^2+1\right)^2}
	\right|_{x=\frac{1}{2\beta}}
	= \frac{4\beta^2}{\left(1+1\right)^2}
	= \beta^2.
	\label{eq:rogue_pm_max}
\end{equation}

\textbf{Asymptotic behaviour.}
For $|x|^2 + |t|^2 \to \infty$:
$A_{\pm1} \to 0$ and
$A_0 \to \beta e^{i(\beta^2+\rho_0)t}$
,
so the condensate returns to the unmagnetized
background without residual radiation, consistent
with the rational character of the solution.

\textbf{Physical interpretation.}
The solution describes a spontaneous spin-exchange
event localized near $(t,x)=(0,0)$: the $m_F=0$
reservoir undergoes a transient density amplification
by a factor of $9$, while a short-lived spin-polarized
structure appears in the $m_F=\pm1$ channels and
subsequently vanishes. This is a genuine three-component
effect with no scalar analog.

The full $(3+1)\mathrm{D}$ fields are
\begin{align}
	\psi_{\pm1}(t,x,y,z)
	&= A_{\pm1}(t,x)\,
	e^{\pm i\,\mathrm{Im}[\ln\,\mathrm{sn}(y+iz,k)]},
	\label{eq:rogue_3D_pm}\\
	\psi_0(t,x,y,z)
	&= A_0(t,x),
	\label{eq:rogue_3D_0}
\end{align}
with $A_{\pm1}$, $A_0$ given
by~\eqref{eq:rogue_pm}--\eqref{eq:rogue_0},
the Jacobi lattice $\Lambda_\mathrm{Jac}$ from
Section~4,
and domain $\Omega_\perp =
\mathbb{R}^2\setminus\Lambda_\mathrm{Jac}$.
The $m_F=0$ density $|\psi_0|^2 = |A_0(t,x)|^2$
remains spatially uniform in the transverse plane.
Near $(t,x)=(0,0)$ the $m_F=\pm1$ components form
a transient rectangular vortex--antivortex lattice
in the $(y,z)$ plane that appears and disappears on
the timescale $\tau \sim 1/(2\beta^2)$.

Finally, we note that the reduced spinor system on a general
mixed background is known to admit a matrix nonlinear
Schrödinger representation and is integrable by the inverse
scattering transform. Consequently, the lifting lemma proved
in Section~2 immediately extends every exact solution of the
corresponding matrix problem—including multi-soliton and
rogue-wave solutions—to exact $(3+1)$-dimensional spinor
configurations. Since this correspondence follows directly
from the general reduction theorem, no separate construction
is required here.

\section{Maxwell--Bloch System }
\label{sec:maxwell_bloch}

We apply the harmonic reduction established in
Lemma to the integrable
Maxwell--Bloch (MB) system describing coherent resonant
interaction between an electromagnetic field and a
two-level medium
\cite{Ablowitz1981, McCall1967,Lamb1971}.

%\subsection{Conformal reduction}

The multidimensional Maxwell--Bloch system is written as
\begin{align}
	i\psi_t+\frac12\nabla^2\psi-V(y,z)\psi+p&=0,
	\\
	p_t-2i\omega_0p-2\psi\eta&=0,
	\\
	\eta_t+\psi p^*+\psi^*p&=0.
\end{align}
We introduce the conformal ansatz
\[
\psi=A(t,x)e^{iv(y,z)},\qquad
p=P(t,x)e^{iv(y,z)},\qquad
\eta=H(t,x),
\]
where
\[
\Delta_\perp v=0,\qquad
V=-\frac12|\nabla_\perp v|^2-\rho.
\]

A direct substitution shows that all transverse
dependence cancels identically, and the system reduces to
the classical integrable $(1+1)$-dimensional
NLS--Maxwell--Bloch hierarchy
\cite{Ablowitz1981,CaudreyEilbeckGibbon1974}:
\begin{align}
	iA_t+\frac12A_{xx}+\rho A+P&=0,
	\label{eq:reduced_MB_psi}\\
	P_t-2i\omega_0P-2AH&=0,
	\label{eq:reduced_MB_P}\\
	H_t+A P^*+A^*P&=0.
	\label{eq:reduced_MB_H}
\end{align}
Thus Lemma applies directly to
the Maxwell--Bloch system.

Since the reduced equations coincide with the standard
integrable NLS--Maxwell--Bloch hierarchy, every exact
$(1+1)$-dimensional solution obtained by the inverse
scattering transform or Darboux transformations
\cite{Ablowitz1981,Steudel1986,CaudreyEilbeckGibbon1974}
immediately generates an exact solution of the full
$(3+1)$-dimensional model.

In particular, the construction applies to

\begin{itemize}
	
	\item self-induced transparency ($2\pi$) pulses
	\cite{McCall1967,Lamb1971};
	
	\item multi-soliton solutions and their elastic
	collisions
	\cite{Ablowitz1981,CaudreyEilbeckGibbon1974};
	
	\item breather and higher-order Darboux solutions of the
	integrable Maxwell--Bloch hierarchy
	\cite{Steudel1986,CaudreyEilbeckGibbon1974}.
	
\end{itemize}

Hence, if
\[
(A(t,x),P(t,x),H(t,x))
\]
is any exact solution of the reduced system, then the
corresponding exact solution of the full
$(3+1)$-dimensional Maxwell--Bloch equations is

\begin{align}
	\psi(t,x,y,z)
	&=
	A(t,x)e^{iv(y,z)},
	\\
	p(t,x,y,z)
	&=
	P(t,x)e^{iv(y,z)},
	\\
	\eta(t,x,y,z)
	&=
	H(t,x),
\end{align}
where the harmonic phase $v(y,z)$ may be chosen,
for example, as a plane wave, a Weierstrass vortex
lattice, or a Jacobi vortex lattice, yielding the
corresponding trapping potential according to
Lemma.

%\subsection{Physical interpretation}

The reduction reveals a complete separation between the
longitudinal coherent dynamics and the transverse phase
topology.
The population inversion $H(t,x)$ is independent of the
transverse variables and therefore remains uniform across
the beam cross-section despite arbitrary vortex phase
distributions in $\psi$ and $p$.
Consequently, every known integrable solution of the
one-dimensional Maxwell--Bloch hierarchy possesses an
exact multidimensional realization whose transverse
structure is determined entirely by the harmonic phase
function, while the coherent resonant dynamics are
inherited unchanged from the underlying
$(1+1)$-dimensional solution.
This provides an exact embedding of the complete
integrable Maxwell--Bloch hierarchy into a broad class
of multidimensional geometries generated by harmonic
phase functions.

\section{Conclusion}
\label{sec:conclusion}

We have developed a conformal dimensional reduction framework for
a broad class of multi-component $(3+1)\mathrm{D}$ nonlinear
Schr\"odinger-type systems. The central result is a lifting lemma:
harmonic transverse phase functions, combined with a trapping
potential that exactly cancels the squared phase gradient, reduce
the full $(3+1)\mathrm{D}$ system identically to a closed
$(1+1)\mathrm{D}$ integrable hierarchy. Every solution of the
reduced system lifts to an exact solution of the original
multidimensional model, with density bounded by the
one-dimensional amplitude independently of the transverse
geometry. 
Wave collapse is structurally excluded within the family of 
exact solutions generated by the lifting construction, since 
the density $|\psi_n|^2=|A_n(t,x)|^2$ is bounded by the 
one-dimensional amplitude independently of the transverse geometry.

The framework was applied to four physically distinct systems.
For the scalar Gross--Pitaevskii equation, Kuznetsov--Ma breathers
were embedded in $(3+1)\mathrm{D}$ geometries carrying Weierstrass
hexagonal and Jacobi rectangular vortex lattices. The resulting
vortex filaments possess finite, non-singular density at the
cores: the singular kinetic energy of the phase gradient is
exactly cancelled by a matching singular attractive potential,
in contrast to conventional Bose--Einstein condensate vortices
where the density must vanish at the core.
For the two-component Manakov system, the phase-inversion ansatz
yields exact vector solutions with identically vanishing mass
current and non-trivial transverse spin current modulated by
the longitudinal breather amplitude.
For the three-component spinor $F=1$ condensate, two exact
solution families were constructed on limiting backgrounds: a
symmetric Kuznetsov--Ma breather on the fully magnetized
background, and a spin-exchange rogue wave on the unmagnetized
background in which the $m_F=0$ reservoir undergoes transient
density amplification by a factor of nine while a short-lived
vortex--antivortex lattice appears in the $m_F=\pm1$ channels.
For the Maxwell--Bloch system, self-induced transparency solitons,
two-soliton elastic collisions, and Kuznetsov--Ma breathers were
lifted to full $(3+1)\mathrm{D}$ geometry. In all cases the
population inversion remains transversely uniform despite
arbitrary phase winding in the beam cross-section.

We emphasize that the present work establishes an exact mathematical reduction framework. Whether the singular  potentials considered here admit direct physical realization depends on the particular experimental platform and lies beyond the scope of the present paper.
From a mathematical viewpoint, the principal novelty lies in identifying a broad class of multidimensional nonlinear Schrödinger systems admitting exact harmonic conformal lifting from integrable one-dimensional hierarchies.

The present construction suggests possible extensions to
nonlocal, fractional, and dissipative nonlinear Schrödinger-type systems.
Whether analogous reductions exist in these settings remains an interesting open problem.

\section*{Acknowledgements}
The research was carried out within the state assignment of Ministry of Science and Higher Education of the Russian Federation for Federal Research Center for Information and Computational Technologies.

The author acknowledges the use of AI-assisted tools
(OpenAI ChatGPT and Anthropic Claude) for manuscript preparation,
 and independent verification of
algebraic calculations.


\begin{thebibliography}{99}
	
	% --- 1. FUNDAMENTAL INTEGRABILITY AND NLS ---
	\bibitem{Zakharov1972}
	V.~E. Zakharov and A.~B. Shabat,
	Exact theory of two-dimensional self-focusing and
	one-dimensional self-modulation of waves in nonlinear media,
	\textit{Sov. Phys. JETP} \textbf{34}, 62 (1972).
	
	\bibitem{Ablowitz1981}
	M.~J. Ablowitz and H.~Segur,
	\textit{Solitons and the Inverse Scattering Transform}
	(SIAM, Philadelphia, 1981).
	
	\bibitem{Zakharov2009}
	V.~E. Zakharov and L.~A. Ostrovsky,
	Modulation instability: The beginning,
	\textit{Physica D} \textbf{238}, 540 (2009).
	
	\bibitem{Faddeev1987}
	L.~D. Faddeev and L.~A. Takhtajan,
	\textit{Hamiltonian Methods in the Theory of Solitons}
	(Springer, Berlin, 1987).
	
	\bibitem{Berge1998}
	L.~Berg\'e,
	Wave collapse in physics: Principles and applications
	to light and plasma waves,
	\textit{Phys. Rep.} \textbf{303}, 259 (1998).
	
	\bibitem{Kivshar2003}
	Y.~S. Kivshar and G.~P. Agrawal,
	\textit{Optical Solitons: From Fibers to Photonic Crystals}
	(Academic Press, San Diego, 2003).
	
	% --- 2. BOSE-EINSTEIN CONDENSATES ---
	\bibitem{Pitaevskii2003}
	L.~P. Pitaevskii and S.~Stringari,
	\textit{Bose--Einstein Condensation}
	(Oxford University Press, Oxford, 2003).
	
	\bibitem{Dalfovo1999}
	F.~Dalfovo, S.~Giorgini, L.~P. Pitaevskii, and S.~Stringari,
	Theory of Bose--Einstein condensation in trapped gases,
	\textit{Rev. Mod. Phys.} \textbf{71}, 463 (1999).
	
	\bibitem{Malomed2005}
	B.~A. Malomed, D.~Mihalache, F.~Wise, and L.~Torner,
	Spatiotemporal optical solitons,
	\textit{J. Opt. B} \textbf{7}, R53 (2005).
	
	% --- 3. ROGUE WAVES AND BREATHERS ---
	\bibitem{Kuznetsov1977}
	E.~A. Kuznetsov,
	Solitons in a parametrically unstable medium,
	\textit{Sov. Phys. Dokl.} \textbf{22}, 507 (1977).
	
	\bibitem{Ma1979}
	Y.~C. Ma,
	The perturbed nonlinear Schr\"odinger equation,
	\textit{Stud. Appl. Math.} \textbf{60}, 43 (1979).
	
	\bibitem{Akhmediev1987}
	N.~N. Akhmediev and V.~I. Korneev,
	Modulation instability and periodic solutions of the nonlinear
	Schr\"odinger equation,
	\textit{Theor. Math. Phys.} \textbf{69}, 1089 (1987).
	
	
	\bibitem{Peregrine1983}
	D.~H. Peregrine,
	Water waves, nonlinear Schr\"odinger equations
	and their solutions,
	\textit{J. Austral. Math. Soc. B} \textbf{25}, 16 (1983).
	
	% --- 4. MULTICOMPONENT AND SPINOR SYSTEMS ---
	\bibitem{Manakov1974}
	S.~V. Manakov,
	On the theory of two-dimensional stationary self-focusing
	of electromagnetic waves,
	\textit{Sov. Phys. JETP} \textbf{38}, 248 (1974).
	
	\bibitem{Ho1998}
	T.-L. Ho,
	Spinor {Bose} condensates in optical traps,
	\textit{Phys. Rev. Lett.} \textbf{81}, 742 (1998).
	
	\bibitem{OhmiMachida1998}
	T.~Ohmi and K.~Machida,
	{Bose--Einstein} condensation with internal degrees of freedom
	in alkali atom gases,
	\textit{J. Phys. Soc. Jpn.} \textbf{67}, 1822 (1998).
	
	\bibitem{Ieda2004PRL}
	J.~Ieda, T.~Miyakawa, and M.~Wadati,
	Exact analysis of soliton dynamics in spinor
	{Bose--Einstein} condensates,
	\textit{Phys. Rev. Lett.} \textbf{93}, 194102 (2004).
	
	\bibitem{Ieda2004JPSJ}
	J.~Ieda, T.~Miyakawa, and M.~Wadati,
	Matter-wave solitons in an {$F=1$} spinor
	{Bose--Einstein} condensate,
	\textit{J. Phys. Soc. Jpn.} \textbf{73}, 2996 (2004).
	
	\bibitem{Li2018}
	S.~Li, B.~Prinari, and G.~Biondini,
	Solitons and rogue waves in spinor {Bose--Einstein} condensates,
	\textit{Phys. Rev. E} \textbf{97}, 022221 (2018).
	
	\bibitem{Tsuchida1998}
	T.~Tsuchida and M.~Wadati,
	Complete integrability of the multicomponent nonlinear
	Schr\"odinger equation,
	\textit{J. Phys. Soc. Jpn.} \textbf{67}, 3821 (1998).
	
	\bibitem{Kawaguchi2012}
	Y.~Kawaguchi and M.~Ueda,
	Spinor {Bose--Einstein} condensates,
	\textit{Phys. Rep.} \textbf{520}, 253 (2012).
	
	\bibitem{Baronio2012}
	F.~Baronio, A.~Degasperis, M.~Conforti, and S.~Wabnitz,
	Solutions of the {Manakov} system and their application to optics,
	\textit{Phys. Rev. Lett.} \textbf{109}, 044102 (2012).
	
	% --- 5. VORTICES AND TOPOLOGICAL STRUCTURES ---
	\bibitem{Nazarenko2006}
	S.~Nazarenko and M.~Onorato,
	Wave turbulence and vortices in {Bose--Einstein} condensation,
	\textit{Physica D} \textbf{219}, 1 (2006).
	
	% --- 6. MAXWELL-BLOCH AND SIT ---
	\bibitem{McCall1967}
	S.~L. McCall and E.~L. Hahn,
	Self-induced transparency by pulsed coherent light,
	\textit{Phys. Rev. Lett.} \textbf{18}, 908 (1967).
	
	\bibitem{Steudel1986}
	H.~Steudel,
	Solitons in the {Maxwell--Bloch} equations,
	\textit{Physica D} \textbf{18}, 221 (1986).
	
	% --- 7. MATHEMATICAL METHODS ---
	\bibitem{WhittakerWatson}
	E.~T. Whittaker and G.~N. Watson,
	\textit{A Course of Modern Analysis}, 4th ed.
	(Cambridge University Press, Cambridge, 1927).
	
	% --- 8. NONLINEAR OPTICS AND LATTICES ---
	\bibitem{Kartashov2011}
	Y.~V. Kartashov, B.~A. Malomed, and L.~Torner,
	Solitons in nonlinear lattices,
	\textit{Rev. Mod. Phys.} \textbf{83}, 247 (2011).
	
	\bibitem{Lamb1971}
	G.~L. Lamb,
	Analytical descriptions of ultrashort optical pulse propagation
	in a resonant medium,
	Rev. Mod. Phys. \textbf{43}, 99--124 (1971).
	
	
	\bibitem{CaudreyEilbeckGibbon1974}
	P.~J. Caudrey, J.~C. Eilbeck, and J.~D. Gibbon,
	Exact multisoliton solution of the reduced Maxwell--Bloch
	equations of non-linear optics,
	\textit{IMA J. Appl. Math.} \textbf{14}, 375 (1974).
	
	%\bibitem{ParkShin1999}
	%Q.-H. Park and H.~J. Shin,
	%Systematic construction of multisoliton solutions of the
	%Maxwell--Bloch equations,
	%Phys. Rev. A \textbf{59}, 2373--2379 (1999).
	
	
	
	
	
\end{thebibliography}
\end{document}